\documentclass[11pt]{article}
\usepackage[margin=1.25in]{geometry}
\usepackage[T1]{fontenc}
\usepackage{newpxtext}
\usepackage{amsmath,amsthm}
\usepackage{newpxmath}
\usepackage{setspace}
\usepackage[authoryear,round]{natbib}
\usepackage[hidelinks,hypertexnames=false]{hyperref}
\hypersetup{pdftitle={A Note on the Identification Step in \textquotedblleft A Semistructural Methodology for Policy Counterfactuals\textquotedblright}, pdfauthor={Henri Ker\"anen}}
\usepackage{microtype}
\newtheorem{proposition}{Proposition}
\newtheorem{lemma}{Lemma}
\theoremstyle{remark}
\newtheorem{remark}{Remark}
\newcommand{\E}{\mathbb{E}}
\newcommand{\R}{\mathbb{R}}

\title{A Note on the Identification Step in\\
``A Semistructural Methodology for Policy Counterfactuals''}
\author{Henri Ker\"anen\thanks{Preliminary draft. Comments welcome: \href{mailto:henri.keranen@helsinki.fi}{henri.keranen@helsinki.fi}.}\\
\small University of Helsinki}
\date{August 2026}

\begin{document}
\maketitle

\begin{abstract}
\noindent The New Keynesian example of \citet{Beraja2023} is not identified at its printed calibration: more than one structure satisfies every condition of its identification step. Two of the paper's six identifying restrictions coincide on equilibrium equations consistent with the printed reduced form, leaving eleven conditions to determine an equation's twelve coefficients. The note derives the rank condition the identification step requires, computable from the reduced form and the restrictions alone. The note also documents and corrects a misprint in the paper's counterfactual display. A one-clause amendment to the paper's Theorem~1 restores its conclusion.
\end{abstract}

\section{Introduction}

\citet{Beraja2023} develops an elegant methodology for constructing counterfactuals with respect to changes in policy rules: it does not require committing to a fully specified model, yet it is not subject to the critique of \citet{Lucas1976}. The method applies to dynamic stochastic models whose equilibria are well approximated by a linear representation. Models that match an economy's reduced-form equilibrium under a benchmark policy rule are observationally equivalent, and the method rests on the insight that many of them, regardless of their microfoundations, also generate an identical counterfactual equilibrium under an alternative one: the paper's principle of counterfactual equivalence. The method proceeds in three steps: estimate, from data generated under the benchmark policy, the recursive law of motion of the equilibrium, the reduced-form model; impose enough restrictions on the equilibrium equations to identify, given the reduced-form model, the remaining coefficients, the structure; and solve for the counterfactual equilibrium of the identified structure under the alternative rule. This note concerns the second step, the identification of the structure.

The original paper illustrates the method in a canonical three-equation New Keynesian model, printed in full in its Section~II: structure, benchmark policy rule, and the reduced form they generate. The note documents that the example is not identified at its printed calibration: a researcher confronted with data generated by the example cannot identify its structure using the paper's restrictions. The result is derived from the paper's own displays. Each line of the structure, an equilibrium equation written as a row of twelve coefficients, must satisfy twelve linear conditions: six of consistency with the reduced form, and the paper's six identifying restrictions. At the printed calibration two of the restrictions coincide exactly on the consistent lines, and eleven conditions do not determine twelve coefficients. Together, the reduced form and the restrictions admit structures that are not counterfactually equivalent; they therefore do not determine the counterfactual.

At the center of the note is a rank condition for the paper's identification step, stated for every reduced form and every set of restrictions the paper's framework allows, and computable from those two objects alone. The condition says exactly when a set of restrictions singles out a line of the structure; when the set does not, the condition counts the free parameters that remain in the line. The requirement it expresses is familiar: identification turns on independent identifying variation, not on the number of restrictions. In fact, the proof of the paper's Theorem~1 already relies on the condition: the uniqueness it asserts holds exactly when the condition does. The theorem's statement assumes only that the restrictions are independent; strengthening that assumption to the condition repairs the theorem in one clause.

The note is intended constructively. The paper's identification step selects a structure from within an observationally equivalent class, and what fails at the printed calibration is the selection. The class and the paper's results about it are not in question: the note relies on them throughout, and the second structure it produces is observationally equivalent to the printed one by the paper's own Lemma~A.1.

Section~\ref{sec:setup} sets out the example and the consistency conditions in the paper's notation; Section~\ref{sec:exhibit} derives from its displays that the example is not identified at the printed calibration. Section~\ref{sec:rank} states the rank condition and shows the failure exact. Section~\ref{sec:misprint} documents and corrects a misprint in the paper's counterfactual display. Section~\ref{sec:disc} returns to the statement of Theorem~1, gives the one-clause repair, and discusses detection and remedies.

\section{Setup and notation}\label{sec:setup}

Notation follows \citet{Beraja2023}, Section~II. The endogenous variables are output, inflation, and the nominal interest rate, $x_t=(y_t,\pi_t,i_t)'$, in log deviations from a zero-inflation steady state; the exogenous shocks $z_t=(b_t,a_t,m_t)'$ --- demand, cost-push, and monetary policy --- follow $z_t=Nz_{t-1}+\epsilon_t$ with $N$ diagonal. With $k=3$ endogenous variables and $s=3$ shocks, each equilibrium condition is a \emph{line} $0 = f\,\E_t x_{t+1} + g\,x_t + h\,x_{t-1} + m\,z_t$, i.e., a row vector $v=(f,g,h,m)\in\R^{3k+s}=\R^{12}$. The \emph{structure} $\xi$ stacks the Euler and NKPC lines,
\begin{equation*}
0=\E_ty_{t+1}+\sigma^{-1}\E_t\pi_{t+1}-y_t-\sigma^{-1}i_t+b_t , \qquad 0=\beta\,\E_t\pi_{t+1}+\kappa\,y_t-\pi_t+a_t ;
\end{equation*}
the \emph{policy} $\Theta$ is the interest-rate rule
\begin{equation*}
0=\theta_\pi \E_t\pi_{t+1}+\theta_y y_t - i_t + \theta_i i_{t-1} + m_t .
\end{equation*}
The paper specifies the example by printing its structure and benchmark policy numerically, each row a line in the block order $\E_tx_{t+1}$, $x_t$, $x_{t-1}$, $z_t$:
\begin{equation}\label{eq:xi0}
\begin{aligned}
\xi^0&=\begin{bmatrix} 1 & \tfrac12 & 0 & -1 & 0 & -\tfrac12 & 0 & 0 & 0 & 1 & 0 & 0\\
0 & 0.9 & 0 & 0.3 & -1 & 0 & 0 & 0 & 0 & 0 & 1 & 0
\end{bmatrix},\\
\Theta^0&=\begin{bmatrix} 0 & 0.7 & 0 & 0.5 & 0 & -1 & 0 & 0 & 0.6 & 0 & 0 & 1
\end{bmatrix}
\end{aligned}
\end{equation}
--- that is, $\sigma=2$, $\beta=0.9$, $\kappa=0.3$, and $(\theta_\pi,\theta_y,\theta_i)=(0.7,\,0.5,\,0.6)$.\footnote{The displayed $\Theta^0$ has $\theta_y=0.5$, while the paper's text introduces the counterfactual as ``$\theta_y=0$ instead of $\theta_y=0.4$''; the printed $P^0,Q^0$ of \eqref{eq:printed} are consistent with $\theta_y=0.5$, and with neither $0.4$ nor $0$. The paper's Section~II.B counterfactual display contains a further, more consequential inconsistency; see Section~\ref{sec:misprint}.}

Under determinacy (the paper's Assumptions~1--2) the structural model generates the reduced form $x_t=P^0x_{t-1}+Q^0z_t$; the paper prints the \emph{reduced-form model} $\Gamma^0\equiv\{P^0,Q^0,N^0\}$ as
\begin{equation}\label{eq:printed}
P^0=\begin{bmatrix} 0 & 0 & -0.35\\
0 & 0 & -0.16\\
0 & 0 & \phantom{-}0.38
\end{bmatrix},
\qquad Q^0=\begin{bmatrix} 1.70 & -0.50 & -0.59\\
1.46 & \phantom{-}3.24 & -0.27\\
1.59 & \phantom{-}1.61 & \phantom{-}0.63
\end{bmatrix},
\qquad N^0=\begin{bmatrix} 0.9 & 0 & 0\\
0 & 0.9 & 0\\
0 & 0 & 0
\end{bmatrix} .
\end{equation}
Note the last diagonal entry of $N^0$: \emph{the monetary policy shock is i.i.d.} The exhibit of Section~\ref{sec:exhibit} turns on these displays.

By the method of undetermined coefficients \citep{Uhlig1995}, a line $v=(f,g,h,m)$ is \emph{consistent} with $\Gamma^0$ --- it holds identically in the state $(x_{t-1},z_t)$, hence on every equilibrium path --- if and only if
\begin{equation}\label{eq:brackets}
fP^2+gP+h=0 \qquad\text{and}\qquad f(QN+PQ)+gQ+m=0 ,
\end{equation}
i.e., $vK=0$ where
\begin{equation*}
K \;\equiv\;
\begin{bmatrix}
P^2 & QN+PQ\\
P & Q\\
I_k & 0\\
0 & I_s
\end{bmatrix}
=\begin{bmatrix}\Psi\\ I_{k+s}\end{bmatrix} \;\in\;\R^{12\times 6} , \qquad \Psi\equiv
\begin{bmatrix}
P^2 & QN+PQ\\
P & Q
\end{bmatrix} .
\end{equation*}
The block rows of $K$ express the four arguments of a line in terms of the state $(x_{t-1},z_t)$: the identity block is the state itself, and the upper block $\Psi$ collects the responses of $\E_tx_{t+1}$ and $x_t$ to it. The matrix $K$ is the paper's own object: it appears, unnamed, as the right-hand factor of its display (SMERLM-NK),\footnote{As printed, the third block row of the display's right-hand factor reads $0_{3,3}$ where $I_3$ belongs; with the zero block, the $x_{t-1}$ coefficients --- here the rule's $\theta_i$ --- would drop from the system altogether. The paper's general form of the display and the online appendix's (Null~OE) place the identity correctly; the display is read here as intended.} which determines $\{P^0,Q^0\}$, the stable solution, by requiring all three lines of the structural model to satisfy \eqref{eq:brackets} --- in matrix form, $\bigl[\begin{smallmatrix}\xi^0\\ \Theta^0\end{smallmatrix}\bigr]K=0$. Consistency reads the same equation in the opposite direction: the reduced form is held fixed, and $vK=0$ asks which lines it supports --- a linear question, whereas solving for $\{P^0,Q^0\}$ was a nonlinear one. The paper's online appendix makes precisely this reversal: its Lemma~A.1 states that a structure is observationally equivalent to $\xi^0$ under the benchmark policy if and only if each of its lines solves this system; the coefficient matrix of its display (Null~OE) is exactly $K'$.

The paper's identification step stacks \eqref{eq:brackets} with its six per-line restrictions, which demand that the structure be of the form
\begin{equation}\tag{Restrictions}\label{eq:restr}
\xi=\begin{bmatrix} \xi_{11} & \xi_{12} & 0 & \xi_{14} & 0 & \xi_{16} & \xi_{17} & \xi_{18} & 0 & 1 & 0 & 0\\
\xi_{21} & \xi_{22} & 0 & \xi_{24} & \xi_{25} & 0 & \xi_{27} & \xi_{28} & 0 & 0 & 1 & 0
\end{bmatrix}
\end{equation}
in the coordinates of \eqref{eq:xi0}; the entries written $\xi_{lj}$ are unrestricted, and the rows are the Euler and NKPC patterns.\footnote{As printed, the (2,1) entry of the paper's display reads $\xi_{12}$ --- the same symbol as its (1,2) entry --- where $\xi_{21}$ belongs; the paper's accompanying text indexes the NKPC row's coefficients $\xi_{2j}$ (it names $\xi_{23}$, $\xi_{26}$, $\xi_{29}$). The display is printed here as intended.} The paper states the restrictions in economic terms: ``only demand shocks shift the Euler equation'', for example (the $a_t$- and $m_t$-coefficients of the Euler row set to zero), or ``the interest rate does not appear in the NKPC'' (the second row's three interest-rate coefficients, on $\E_ti_{t+1}$, $i_t$, and $i_{t-1}$, set to zero).

\section{The example is not identified}\label{sec:exhibit}

Each line of the structure must satisfy twelve conditions: the six of \eqref{eq:brackets} --- consistency with the reduced-form model $\Gamma^0$ generated under the benchmark policy --- and the six of \eqref{eq:restr}. The paper's Theorem~1 concludes that imposing the six restrictions is enough: each line is the unique solution of these twelve conditions. That the conclusion fails at the printed calibration can be established by counting the conditions themselves: for a line consistent with $\Gamma^0$, two of its six restrictions coincide, and eleven conditions do not determine twelve coefficients. An exclusion supplies identifying variation only if the excluded variable moves the equilibrium in a direction of its own.

Both patterns of \eqref{eq:restr} exclude the two variables that belong to the policy rule alone: the lagged interest rate and the monetary policy shock. The shock, i.i.d., shifts the rule once and carries no news about future policy; the lagged rate too shifts it once, with the smoothing coefficient $\theta_i=0.6$. A line $v$ of the structure meets the pair in two ways: directly, through its own coefficients on them, $v_{i_{t-1}}$ and $v_{m_t}$ (the entries the two exclusions target), and indirectly, through its other arguments, since $\E_tx_{t+1}$ and $x_t$ move when the pair moves. Consistency with $\Gamma^0$ demands that the line hold identically in the state $(x_{t-1},z_t)$, so, variable by variable, the direct coefficient must cancel the indirect appearance. The indirect appearances can be written out. The lagged rate moves $x_t$ by $p_i$, the $i_{t-1}$-column of $P^0$; the shock moves it by $q_m$, the $m_t$-column of $Q^0$ --- the reduced-form responses of $x_t$ to the lagged rate and to the shock. Advancing the reduced form one period and taking expectations, $\E_tx_{t+1}=P^0x_t+Q^0N^0z_t$: the lagged rate moves the expectation by $P^0p_i$, the shock by $P^0q_m$ --- plus a forecast term, $q_m$ scaled by the shock's own persistence, zero for the i.i.d.\ shock. These responses enter the line through its blocks $f$ and $g$: its indirect appearance at the lagged rate is $f(P^0p_i)+g\,p_i=(fP^0+g)\,p_i$; at the shock, $(fP^0+g)\,q_m$. One row, $fP^0+g$ (what the line picks up through $\E_tx_{t+1}$ and $x_t$), multiplies both responses. The $i_{t-1}$-column of the first condition in \eqref{eq:brackets} and the $m_t$-column of the second therefore read
\begin{equation*}
(fP^0+g)\,p_i+v_{i_{t-1}}=0, \qquad (fP^0+g)\,q_m+v_{m_t}=0.
\end{equation*}
Each condition solves for the targeted entry: $v_{i_{t-1}}$ and $v_{m_t}$ are determined linearly by $f$ and $g$ --- the blocks behind the indirect appearances.

The three lines of the structural model are consistent by construction (that is how $\{P^0,Q^0\}$ was determined), and their direct coefficients are printed in \eqref{eq:xi0}: zero in the Euler and NKPC lines, $\theta_i$ and $1$ in the rule. Stack the three rows $fP^0+g$ into a matrix $J$; across the three lines, the two conditions read as two systems in the columns:
\begin{equation*}
J\,p_i=-\begin{pmatrix}0\\0\\\theta_i\end{pmatrix}, \qquad J\,q_m=-\begin{pmatrix}0\\0\\1\end{pmatrix}.
\end{equation*}
The two right-hand sides are proportional --- the first is $\theta_i$ times the second --- and the matrix $J$ is invertible under determinacy, so the solutions inherit the proportion: $p_i=\theta_i\,q_m$, an identity in the example's parameters.

The remaining columns of $P^0$ are zero. At $y_{t-1}$ and $\pi_{t-1}$ the first condition in \eqref{eq:brackets} reads as it did at the lagged rate, but with no direct coefficient to do the cancelling: no line of \eqref{eq:xi0} carries lagged output or inflation. The indirect appearances must therefore be zero on their own: across the three lines, $J$ times each of the first two columns of $P^0$ is zero. With $J$ invertible, so are the columns themselves --- $p_i=\theta_i\,q_m$ is the only column $P^0$ has. The printed page displays it: in \eqref{eq:printed} the first two columns of $P^0$ are zero, and, to two decimals, the third is $0.6$ times the $m_t$-column of $Q^0$. Of the lagged state, only the rate reaches the equilibrium, and only through the intercept that the rule feeds the system, $\theta_i\,i_{t-1}+m_t$. To the rest of the economy the pair is thus the same event at different sizes: two states that differ only in how the intercept is split between the lagged rate and the shock generate the same path of output, inflation, and the interest rate.

What the equilibrium bundles, no consistent line can separate. With $p_i=\theta_i\,q_m$, what $v_{i_{t-1}}$ and $v_{m_t}$ cancel is one and the same scalar --- $(fP^0+g)\,q_m$, the line's exposure to the intercept --- once at scale $\theta_i$ and once at scale $1$: every line consistent with $\Gamma^0$ carries $v_{i_{t-1}}=\theta_i\,v_{m_t}$. An exclusion, imposed on a consistent line, therefore strikes no free unknown: consistency has already solved the targeted entry out, and the exclusion is one linear restriction on the coefficients that remain free. The pair supplies a single direction of identifying variation to restrict against --- two excluded instruments with collinear first stages count as one. A consistent line thus excludes $i_{t-1}$ if and only if it excludes $m_t$ --- one restriction stated twice.

The count is now short. Given consistency, the two exclusions have become one restriction. The paper's six restrictions thus amount to five, and a line's twelve conditions to eleven: eleven linear equations in twelve unknowns. The system is solvable: the printed line satisfies all twelve conditions, the restrictions visibly in \eqref{eq:xi0}, and it is consistent by construction. Through each of $\xi^0$'s lines therefore runs at least a one-parameter family of solutions of the same twelve conditions.

The paper's Theorem~1 concludes that exactly one structural model satisfies its conditions (i)--(ii): reduced form $\Gamma^0$ under the policy $\Theta^0$, and the restrictions in each line of the structure. The family delivers a second. Substitute for the printed NKPC line any other member $\bar v$ of its family, keeping the Euler line as it stands in \eqref{eq:xi0}; the new structure then differs from $\xi^0$ in its NKPC line alone. None of the conditions ties one line to the other: \eqref{eq:brackets} constrains one line at a time, and each pattern of \eqref{eq:restr} binds its own line, so each line of the new structure can be checked alone for the same two properties, consistency with $\Gamma^0$ and its own restrictions. The new line $\bar v$ brings both by membership: the twelve conditions defining its family are exactly the six of \eqref{eq:brackets} and the six of its pattern in \eqref{eq:restr}. The Euler line, untouched, brings both from the printed page: its restrictions are visible in \eqref{eq:xi0}, and it is consistent by construction. Every line of the new structure is thus consistent with $\Gamma^0$, and by the paper's own Lemma~A.1 the structure is observationally equivalent to $\xi^0$ under the benchmark policy. Paired with the same $\Theta^0$, it is a second structural model satisfying the conditions (i)--(ii) of the paper's Theorem~1 as stated. The example is not identified.

\section{The rank condition}\label{sec:rank}

Identification of a line amounts to eliminating every $\Gamma^0$-consistent line other than the true one, so the first step is to describe them all.

\begin{lemma}[Consistent lines]\label{lem:null}
A line $v=(f,g,h,m)$ is consistent with a reduced-form model $\Gamma=\{P,Q,N\}$ if and only if
\begin{equation}\label{eq:pinning}
(h,\,m)\;=\;-(f,\,g)\,\Psi ,
\end{equation}
so the set of $\Gamma$-consistent lines is parameterized freely by $(f,g)$: a subspace of dimension exactly $2k$, with basis the rows of
\begin{equation*}
B\equiv\bigl[\,I_{2k} \mid -\Psi\,\bigr] =\begin{bmatrix} I_k & 0 & -P^2 & -(QN+PQ)\\
0 & I_k & -P & -Q
\end{bmatrix}
\in\R^{2k\times(3k+s)} .
\end{equation*}
\end{lemma}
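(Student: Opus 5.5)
The plan is to read the consistency condition \eqref{eq:brackets} in the form $vK=0$ and exploit the block shape $K=\bigl[\begin{smallmatrix}\Psi\\ I_{k+s}\end{smallmatrix}\bigr]$ already recorded in Section~\ref{sec:setup}. Partition $v=(f,g,h,m)$ conformably with the two block rows of $K$. The first $2k$ coordinates $(f,g)$ multiply $\Psi$, and the last $k+s$ coordinates $(h,m)$ multiply the identity. This gives
\begin{equation*}
vK=(f,g)\,\Psi+(h,m)\,I_{k+s}=(f,g)\,\Psi+(h,m).
\end{equation*}
Hence $vK=0$ holds if and only if $(h,m)=-(f,g)\Psi$, which is \eqref{eq:pinning}. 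Before relying on this, I will check that the block product reproduces \eqref{eq:brackets} column by column. The first $k$ columns of $K$ give $fP^2+gP+h$, and the last $s$ columns give $f(QN+PQ)+gQ+m$. Both directions of the equivalence then follow at once, because the identity block isolates $(h,m)$.

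For the dimension and basis claims, consider the linear map $\phi:\R^{2k}\to\R^{3k+s}$ defined by $(f,g)\mapsto\bigl((f,g),\,-(f,g)\Psi\bigr)=(f,g)B$. By the first part, the image of $\phi$ is exactly the set of consistent lines. The map is injective, since its first $2k$ coordinates return $(f,g)$. The set of consistent lines is therefore a subspace, namely the left null space of $K$, and it is the isomorphic image of $\R^{2k}$, so its dimension is $2k$. The rows of $B$ are the images of the standard basis vectors. They are linearly independent because $B$ contains the block $I_{2k}$, so $\operatorname{rank}B=2k$, and they span the image. This proves that the rows of $B$ form a basis and that the parameterization by $(f,g)$ is free.

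There is a cross-check available: $K$ has full column rank $k+s$ because of its identity block, so rank--nullity gives a left null space of dimension $(3k+s)-(k+s)=2k$, which agrees with the count above. I expect no real obstacle here. The only care needed is bookkeeping, namely confirming that $K$'s block rows are ordered as $(\E_tx_{t+1},x_t,x_{t-1},z_t)$, matching the coordinates of $v$. The lemma uses no determinacy or invertibility assumption, so it holds for every $\Gamma=\{P,Q,N\}$, not only $\Gamma^0$.
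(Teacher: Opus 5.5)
Your proof is correct and is essentially the paper's own argument. The block form of $K$ gives $vK=(f,g)\Psi+(h,m)$, so the consistent lines are the graph of $(f,g)\mapsto-(f,g)\Psi$, which is the row space of $B$, and the $I_{2k}$ block makes its rows independent. Your rank--nullity cross-check is a harmless extra the paper does not need.
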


\begin{proof}
By the block form of $K$, $vK=(f,g)\Psi+(h,m)$, so $vK=0$ is exactly \eqref{eq:pinning}: the $\Gamma$-consistent lines form the graph of $(f,g)\mapsto-(f,g)\Psi$, that is, the row space of $B$, whose $I_{2k}$ block makes the rows independent.
\end{proof}

Written out as equations, the rows of $B$ are identities that hold on every equilibrium path regardless of the behavior that generated them: with $e_j$ the $j$-th coordinate vector, row $k+j$ is the law of motion of the $j$-th variable, $0=x_{j,t}-e_j'Px_{t-1}-e_j'Qz_t$, and row $j$ is its expectation one period ahead with $x_t$ substituted out, $0=\E_tx_{j,t+1}-e_j'P^2x_{t-1}-e_j'(QN+PQ)z_t$. Adding any combination of them to an equilibrium equation produces another equation that fits the data equally well; identification requires the restrictions to eliminate this $2k$-dimensional span.\footnote{The ingredients are in the paper's online appendix: its Lemma~A.1 shows that every line of a structure observationally equivalent to $\xi^0$ under the benchmark policy solves $vK=0$; the comment that follows it notes the system is underdetermined --- $3k+s$ unknowns against $k+s$ equations; and the proof of its Proposition~A.1 already recovers the rest of a line from its $(f,g)$-half. Lemma~\ref{lem:null} adds the exact dimension and the explicit basis.}

Nothing in the lemma is specific to the example or its dimensions. In the paper's general model (SME), the structure's conditions carry one additional term, in $\E_tz_{t+1}$; the paper's Definition~1 accordingly collects a structure's shock coefficients into the single block $(LN+M)$, so a line of a general structure is again a vector of $3k+s$ coordinates $v=(f,g,h,m)$, with $m=(LN+M)_l$. The policy lines of (SME) carry no such term at all.\footnote{The blocks $(f,g,h,m)$ lowercase Definition~1's $\xi\equiv[F\;G\;H\;(LN+M)]$; the Section~II displays name no blocks. In the paper's notation the structure has $k-p$ lines, $p$ of the $k$ variables being policy variables.} Either way, a line's consistency with a reduced-form model $\Gamma=\{P,Q,N\}$ reads exactly as in \eqref{eq:brackets}, and the lemma applies unchanged.

Section~\ref{sec:exhibit}'s argument was a count, and counting goes only so far: it detects a failure without delimiting it, and it bounds a solution set's dimension only from below. Two questions are left open: which menus of restrictions stay independent on the consistent lines, and whether a family the count finds is the whole solution set of a line's conditions. Both turn on the rank of one matrix, computable from the reduced form and the menu alone. The matrix is read off $B$. The paper's Theorem~1 takes as given, for each line of the structure, a menu of $2k$ linear restrictions (its $\{R_l,r_l\}$) and assumes them independent.

\begin{proposition}[The rank condition]\label{prop:rank}
Let $\Gamma$ be a reduced-form model and $Rv'=r$ a menu of $2k$ linear restrictions on a line, with $R=[\,R_{fg}\mid R_{hm}\,]\in\R^{2k\times(3k+s)}$ split along the line's blocks. The menu is satisfied by exactly one $\Gamma$-consistent line for every $r$ if and only if
\begin{equation}\label{eq:rank}
RB'\;=\;R_{fg}-R_{hm}\,\Psi'\;\in\;\R^{2k\times2k}
\end{equation}
is nonsingular. When $RB'$ is singular, no value of $r$ delivers uniqueness: the $\Gamma$-consistent lines satisfying the menu, if any, form an affine family of dimension $2k-\operatorname{rank}(RB')$.
\end{proposition}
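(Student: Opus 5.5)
The plan is to reduce the question to a square linear system in the free coordinates $(f,g)$ by means of Lemma~\ref{lem:null}. By that lemma, every $\Gamma$-consistent line is uniquely of the form $v=uB$ with $u=(f,g)\in\R^{2k}$, and the map $u\mapsto uB$ is injective because of the $I_{2k}$ block. A consistent line satisfies the menu exactly when $R(uB)'=r$, that is, when $(RB')\,u'=r$. So the consistent lines satisfying the menu are in bijection, via this injective linear map, with the solution set of a $2k\times 2k$ linear system in $u$. The first step is to compute the coefficient matrix explicitly. Splitting $R=[R_{fg}\mid R_{hm}]$ and $B'=\bigl[\begin{smallmatrix} I_{2k}\\ -\Psi'\end{smallmatrix}\bigr]$ gives $RB'=R_{fg}-R_{hm}\Psi'$, which is the expression in \eqref{eq:rank}.

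The equivalence then follows from standard linear algebra for square systems. If $RB'$ is nonsingular, then for every $r$ there is exactly one $u'=(RB')^{-1}r$, and hence exactly one consistent line $uB$. Conversely, suppose uniqueness holds for every $r$. Take $r=0$: any nonzero $u$ in the left kernel (more precisely, with $(RB')u'=0$) would produce a nonzero consistent line satisfying the menu, alongside the zero line. So the kernel of $RB'$ is trivial, and $RB'$ is nonsingular. It would even suffice to assume uniqueness for a single $r$ for which a solution exists.

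For the singular case, the solution set of $(RB')u'=r$ is either empty or an affine translate of $\ker(RB')$. By rank--nullity, that kernel has dimension $2k-\operatorname{rank}(RB')\ge 1$. The injective linear map $u\mapsto uB$ carries affine subspaces to affine subspaces of the same dimension. Therefore the consistent lines satisfying the menu are either absent or form an affine family of dimension $2k-\operatorname{rank}(RB')$, and this is never a single point, whatever the value of $r$.

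I do not expect a genuine obstacle. The only care needed is in the bookkeeping of transposes: lines are row vectors while the menu is written as $Rv'=r$, so one must check that $R(uB)'=RB'u'$. One must also confirm that injectivity of $u\mapsto uB$ preserves dimension. Both checks are immediate from the identity block of $B$.
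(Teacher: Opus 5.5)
Your proposal is correct and follows the paper's proof. You parametrize the consistent lines as $v=uB$ via Lemma~\ref{lem:null}, reduce the menu to the square system $(RB')u'=r$ with $RB'=R_{fg}-R_{hm}\Psi'$, and read off uniqueness and the affine dimension $2k-\operatorname{rank}(RB')$ using the injectivity of $u\mapsto uB$. Your $r=0$ argument for the converse makes explicit a step the paper leaves implicit.
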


\begin{proof}
On the consistent lines $v=aB$ of Lemma~\ref{lem:null}, $a=(f,g)$ free, the menu reads $RB'\,a'=r$, a square linear system in the $2k$ free coordinates; since the rows of $B$ are a basis, the system's solutions correspond one-to-one and linearly to the consistent lines satisfying the menu. It has a unique solution for every $r$ if and only if $RB'$ is nonsingular, that is, if and only if the menu eliminates the span of the rows of $B$; otherwise its solution set, and with it the family of such lines, is empty or affine of dimension $2k-\operatorname{rank}(RB')$.
\end{proof}

Independence of the menu is not assumed: dependent rows make \eqref{eq:rank} singular outright, so nonsingularity already entails it. The criterion is, moreover, the paper's own determinant. The proof of Theorem~1 stacks the menu beneath the $k+s$ conditions of \eqref{eq:brackets} into a $(3k+s)$-square system, the system the paper displays as ``exactly determined''. Because the consistency rows carry an identity block in the $(h,m)$-columns, they eliminate those coefficients, the same substitution Lemma~\ref{lem:null} already made, and the elimination reduces the stacked system to \eqref{eq:rank}: the system's determinant equals $\det(RB')$ exactly,\footnote{Post-multiplying the stacked matrix by the unimodular $\bigl[\begin{smallmatrix}I_{2k}&0\\-\Psi'&I\end{smallmatrix}\bigr]$ sends it to $\bigl[\begin{smallmatrix}0&I\\RB'&R_{hm}\end{smallmatrix}\bigr]$; the block swap costs $(-1)^{2k(k+s)}$, an even power, so the equality holds sign included.} and its rank equals $(k+s)+\operatorname{rank}(RB')$. Singularity of \eqref{eq:rank} is thus singularity of the system the paper displays, whose entire rank deficiency sits in the menu's $2k$ rows.

In the example the matrix is singular, and Section~\ref{sec:exhibit}'s identity is the reason: it makes two rows of \eqref{eq:rank} collinear. An exclusion of $i_{t-1}$ or of $m_t$ is a menu row with zero $(f,g)$-part and a single unit entry in the $(h,m)$-blocks, so its row of $RB'$ is, up to sign, a transposed column of $\Psi$, the responses of $\E_tx_{t+1}$ and $x_t$ to the excluded variable. For the lagged rate that column is $(P^0p_i,\;p_i)$; for the shock it is $(P^0q_m,\;q_m)$, where the $N^0$-term has once more vanished because the shock is i.i.d. With $p_i=\theta_iq_m$, the first is $\theta_i$ times the second: the two exclusions restrict against one and the same direction of identifying variation. $R_{fg}-R_{hm}\Psi'$ is therefore singular for every menu containing both, whatever its other $2k-2$ rows. Both of the paper's patterns exclude the pair. Each pattern's six rows are distinct coordinate vectors, linearly independent: the hypothesis holds as printed. What fails in the example is not the theorem's stated condition but the one its conclusion needs: independence on the consistent lines.

The same matrix decides whether the collapse goes further. In the example $2k=6$, so identification of a line needs rank six; at the printed calibration $R_{fg}-R_{hm}\Psi'$ has rank exactly five for each of the two patterns. From above, the identity caps the rank: with two of the six rows collinear, they span at most five directions. From below, a $5\times5$ minor of each pattern's matrix is nonzero: five of its rows are independent, and no further dependence hides. Each line's twelve conditions thus have rank eleven, and by Proposition~\ref{prop:rank} their solution set, which contains the printed line, is an affine family of dimension exactly one: Section~\ref{sec:exhibit}'s ``at least'' is exact, and the families are not part of the solution sets but all of them.

For the NKPC pattern the family's direction has a closed form in the example's parameters. Write $\rho=0.9$ for the common persistence of the demand and cost-push shocks in $N^0$, and $e_y$, $e_\pi$, $e_i$ for the coordinate vectors of output, inflation, and the rate, and take
\begin{equation}\label{eq:dir}
v\;=\;(\,f,\;-\rho f,\;0,\;0\,), \qquad f\;\equiv\;\bigl(\,e_\pi'q_m,\;-\,e_y'q_m,\;0\,\bigr),
\end{equation}
a nonzero combination of the output and inflation coordinates that is orthogonal to $q_m$. Section~\ref{sec:exhibit} left $P^0$ with a single nonzero column, $P^0=\theta_i\,q_m e_i'$, so $fP^0=\theta_i\,(fq_m)\,e_i'=0$, and both brackets of \eqref{eq:brackets} close by short algebra: $v$ is a consistent line, zero in every coefficient the NKPC pattern restricts. The solution set through the printed line consists of its translates by $\lambda v$, $\lambda\in\R$. Written as an equation, $v$ says $\E_t[fx_{t+1}]=\rho\,fx_t$ --- an identity of the reduced form, true regardless of the behavior that generated it. For the Euler pattern no equally short expression offers itself, but by the same rank count its direction exists and is unique up to scale, and members of either family differ from the printed lines only in coefficients the paper's restrictions never touch.

\begin{remark}[The failure is consequential]\label{rem:conseq}
The multiplicity is not a harmless normalization: members of the families are not counterfactually equivalent. Scale each family's direction to a unit coefficient on its line's own expectation term --- $\E_ty_{t+1}$ for the Euler pattern, $\E_t\pi_{t+1}$ for the NKPC --- and add half of it to that line of $\xi^0$. The resulting structure reproduces $\Gamma^0$ and satisfies \eqref{eq:restr} in both of its lines, yet under the paper's counterfactual rule ($\theta_y=0$) it puts the impact response of output to a demand shock at $4.32$ against the true $3.51$.\footnote{The paper's printed counterfactual display shows $5.32$ for this entry; the display is itself misprinted (it solves the shock-loading step at the benchmark $P^0$), and $3.51$ is the corrected value. See Section~\ref{sec:misprint}. Here and in what follows, every decimal not printed in the paper is computed from the exact solution.} A finer reading of condition~(i) of the paper's Theorem~1 does not remove the multiplicity: members near the printed lines yield determinate structures that generate $\Gamma^0$ as their unique reduced form. Finally, \eqref{eq:brackets} and \eqref{eq:rank} contain only $\Gamma^0$ and the menu: the observed policy $\Theta^0$ enters nowhere. Section~\ref{sec:disc} returns to this.
\end{remark}

\section{A misprint in the counterfactual display}\label{sec:misprint}

Comparing the numbers of Remark~\ref{rem:conseq} with the paper requires one correction to its Section~II.B display. That section considers a more hawkish rule, the counterfactual $\Theta^1$, which sets $\theta_y=0$ and leaves every other coefficient of $\Theta^0$ unchanged. It displays two counterfactual reduced forms, the paper's $\Gamma^1$ and $\tilde\Gamma^1$: $\{P^1,Q^1\}$ for the structural model $\{\xi^0,\Theta^1\}$, and $\{\tilde P^1,\tilde Q^1\}$ for $\{\xi^1,\Theta^1\}$, where $\xi^1$ is the working-capital structure that the paper's Section~II.A exhibits as observationally equivalent to $\xi^0$ under the benchmark policy. The printed transition matrices $P^1$ and $\tilde P^1$ are correct at display accuracy (within a unit in the last decimal).\footnote{So is the printed structure $\xi^1$: its Euler line is that of \eqref{eq:xi0}, and a line with the pattern of the paper's working-capital Phillips curve --- its display (NKPC2), after \citet{Christiano2010} --- and with the printed $a_t$-coefficient $1.5$ is consistent with $\Gamma^0$ at coefficients $(0.7001,\,0.3826,\,0.0449)$ on $\E_t\pi_{t+1}$, $y_t$, and $i_t$; the printed second row is its two-decimal rounding.} The printed shock loadings $Q^1$ and $\tilde Q^1$, however, are not.

Section~\ref{sec:exhibit}'s identity detects the error in print. Its derivation used only three features of the environment: that the lagged rate and the monetary policy shock belong to the rule alone, with direct coefficients $\theta_i$ and $1$; that the shock is i.i.d.; and that no line carries lagged output or inflation. $\Theta^1$ changes only $\theta_y$, so all three features survive. The counterfactual pair $\{P^1,Q^1\}$ must satisfy the same identity: the $i_{t-1}$-column of $P^1$ equal to $\theta_i$ times the $m_t$-column of $Q^1$, and zeros in the other two columns. The pair $\{\tilde P^1,\tilde Q^1\}$ owes it as well: the shock is unchanged, and the lines of the printed $\xi^1$ carry neither the excluded pair nor lagged output and inflation, so all three features hold for $\{\xi^1,\Theta^1\}$. In the printed pair $\{P^1,Q^1\}$, the zeros are in place but the proportion is not: in the output row, $0.6\times(-0.84)=-0.504$ stands against the printed $-0.63$. In Section~\ref{sec:exhibit}, the benchmark display passed the same test. Any true reduced form satisfies the consistency conditions \eqref{eq:brackets} by construction, and the identity is one of their consequences: in failing it, the printed pair $\{P^1,Q^1\}$ violates \eqref{eq:brackets} itself. The printed transitions are the true counterfactual ones at display accuracy, so the error is confined to the loadings. The proximate cause is traceable: re-solving the second bracket of \eqref{eq:brackets} for each pair's loadings, with the benchmark $P^0$ held in place of the pair's own transition matrix, reproduces the printed $Q^1$ in all nine entries at printed precision, and the printed $\tilde Q^1$ up to rounding in the final digit.

Re-solving the same bracket at the correct transitions instead yields the display as it should read (rows and columns ordered as in \eqref{eq:printed}; transitions as printed, loadings corrected):
\begin{equation*}
\begin{aligned}
P^1&=\begin{bmatrix} 0 & 0 & -0.63\\
0 & 0 & -0.34\\
0 & 0 & \phantom{-}0.48
\end{bmatrix},
&\qquad Q^1&=\begin{bmatrix} 3.51 & -2.16 & -1.05\\
3.06 & \phantom{-}1.02 & -0.56\\
1.56 & \phantom{-}0.52 & \phantom{-}0.81
\end{bmatrix};\\[4pt]
\tilde P^1&=\begin{bmatrix} 0 & 0 & -0.63\\
0 & 0 & -0.33\\
0 & 0 & \phantom{-}0.49
\end{bmatrix},
&\qquad \tilde Q^1&=\begin{bmatrix} 3.64 & -2.49 & -1.05\\
2.98 & \phantom{-}1.17 & -0.56\\
1.52 & \phantom{-}0.60 & \phantom{-}0.81
\end{bmatrix} .
\end{aligned}
\end{equation*}
The proportion is restored: in both pairs, the $i_{t-1}$-column again stands to the $m_t$-column in the ratio $0.6$ to within display rounding. The correction changes quantitative readings: under the more hawkish rule, the impact response of output to a demand shock roughly doubles, from $1.70$ in \eqref{eq:printed} to $3.51$, whereas the printed display would have it roughly triple, from $1.70$ to $5.32$. The qualitative message of the paper's Section~II.B stands: the two counterfactual reduced forms still differ, $\tilde\Gamma^1\neq\Gamma^1$, so the working-capital model illustrates the Lucas critique.

\section{Discussion}\label{sec:disc}

\paragraph{The statement of Theorem 1.}
The theorem hypothesizes ``a set of $2k$ independent linear restrictions on the coefficients in line $l$ of a structure $\xi$'' and concludes uniqueness. The example shows that the hypothesis, as printed, does not secure the conclusion: Section~\ref{sec:rank} verified the hypothesis for both patterns of \eqref{eq:restr}; Section~\ref{sec:exhibit} produced a second model. The difficulty is confined to one word. ``Independent'' asks the restrictions to be independent of one another; uniqueness needs them independent on the consistent lines. By Section~\ref{sec:rank}, the proof's ``exactly determined'' asserts the stronger property; the statement stops at the weaker. The paper's Section~II.C, invoking the theorem for the general case, states the identification step as a count: ``knowledge of $\{\Gamma^0,\Theta^0\}$ imposes only six restrictions per line in the structure, whereas there are 12 unknown structural coefficients per line. Then, imposing six additional restrictions per line identifies the full structure $\xi^0$.'' Section~\ref{sec:exhibit} carries the count one step further: given consistency, the six additional restrictions amount to five. The paper's count shows its system square, twelve conditions on twelve coefficients; it does not show them independent.

\paragraph{The repair.}
One clause suffices: replace ``independent'' with the rank condition of Proposition~\ref{prop:rank}. The strengthened hypothesis asks that each menu's matrix \eqref{eq:rank} be nonsingular, equivalently that the paper's stacked system be. The conclusion then holds entire.

\begin{proposition}[Identification under the rank
condition]\label{prop:repair} Let $\Gamma$ be a reduced-form model, $\Theta$ a policy, and $\{R_l,r_l\}$ a menu of $2k$ linear restrictions for each structural line $l=1,\dots,k-p$, with $R_lB'$ nonsingular; let $\hat\xi$ be the structure whose $l$-th line is its menu's unique $\Gamma$-consistent solution (Proposition~\ref{prop:rank}). Then every structural model $\{\xi,\Theta\}$ that has reduced form $\Gamma$ under $\Theta$ and satisfies $R_l\xi_l'=r_l$ in each line is $\{\hat\xi,\Theta\}$: at most one model satisfies the two conditions. If moreover $\Gamma$ is stable and the policy's lines are $\Gamma$-consistent, then $\{\hat\xi,\Theta\}$ is itself such a model if and only if it satisfies Assumptions~1--2.
\end{proposition}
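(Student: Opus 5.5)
The plan is to split the statement into its two parts: uniqueness, which follows almost directly from Proposition~\ref{prop:rank}, and existence, which needs the paper's Assumptions~1--2 together with a uniqueness-of-solution argument for the reduced form.

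\emph{Uniqueness.} Take any structural model $\{\xi,\Theta\}$ that has reduced form $\Gamma$ under $\Theta$. The first step is to show that every structural line $\xi_l$ is $\Gamma$-consistent. The stable solution $\{P,Q\}$ is characterized by requiring all lines of the stacked model to satisfy \eqref{eq:brackets}; this is the method of undetermined coefficients, equivalently the paper's Lemma~A.1 direction ``reduced form $\Rightarrow$ $vK=0$''. So $\xi_lK=0$ for each $l$. By Lemma~\ref{lem:null}, $\xi_l=a_lB$ for some $a_l\in\R^{2k}$. The menu then reads $R_lB'a_l'=r_l$, and nonsingularity of $R_lB'$ forces $a_l=r_l'(R_lB')^{-\prime}$. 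This is exactly the coefficient vector of $\hat\xi_l$, so $\xi_l=\hat\xi_l$ line by line. The policy $\Theta$ is shared by hypothesis, hence $\{\xi,\Theta\}=\{\hat\xi,\Theta\}$. Note that the argument uses no property of $\Theta$.

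\emph{Existence.} Assume $\Gamma$ is stable and the policy lines are $\Gamma$-consistent. By construction each line of $\hat\xi$ is $\Gamma$-consistent, so the stacked matrix $\bigl[\begin{smallmatrix}\hat\xi\\ \Theta\end{smallmatrix}\bigr]$ annihilates $K$. This says that $x_t=Px_{t-1}+Qz_t$ solves the model's expectational difference equations identically in the state.

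($\Leftarrow$) Suppose $\{\hat\xi,\Theta\}$ satisfies Assumptions~1--2. Determinacy then gives a unique stable recursive solution. Since $\{P,Q\}$ is a stable solution, it must be that solution, so $\Gamma$ is the reduced form of $\{\hat\xi,\Theta\}$.

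($\Rightarrow$) If $\{\hat\xi,\Theta\}$ is a model with reduced form $\Gamma$, then Assumptions~1--2 hold by definition. The paper defines the reduced form of a structural model only under those assumptions, so ``such a model'' already includes them.

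The main obstacle is the bookkeeping in the existence part rather than any computation. I have to check that the paper's notion of ``having reduced form $\Gamma$'' coincides with ``$\Gamma$ is the unique stable solution under Assumptions~1--2''. I also have to check that in the general model (SME) the $\E_tz_{t+1}$ term is absorbed into $m=(LN+M)_l$, so that consistency really is $vK=0$, as the paragraph after Lemma~\ref{lem:null} asserts. Once these are confirmed, both directions are immediate.
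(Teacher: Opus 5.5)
Your proposal is correct and follows the paper's proof essentially step for step. Uniqueness comes from Lemma~A.1 together with the nonsingular square system $R_lB'a_l'=r_l$ on the consistent lines, and existence comes from the uniqueness of the stable solution under Assumptions~1--2, with the converse holding by definition. In the ($\Leftarrow$) direction, also note, as the paper does, that $\{\hat\xi,\Theta\}$ satisfies its menus by construction, so it meets both conditions and not only the reduced-form one.
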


\begin{proof}
A model with reduced form $\Gamma$ under $\Theta$ has every line of its structure $\Gamma$-consistent, by the paper's Lemma~A.1; if it also satisfies the menus, each of its structural lines is its menu's unique consistent solution, so its structure is $\hat\xi$. For the second claim, every line of $\{\hat\xi,\Theta\}$ is $\Gamma$-consistent, the structural lines by construction and the policy's by hypothesis, so $\Gamma$ solves the model's equilibrium system. If the model satisfies Assumptions~1--2, its stable solution is unique, and $\Gamma$, being stable, is that solution: the model has reduced form $\Gamma$ and satisfies its menus by construction. Conversely, a model with a reduced form satisfies Assumptions~1--2 by the definition of having one.
\end{proof}

The proposition's two conditions are the theorem's (i) and (ii). Both hypotheses of its existence clause are automatic for an observed pair: the observed reduced form is stable, and the observed policy is part of the model that generated it, so its lines are $\Gamma^0$-consistent. Deciding existence is then a finite computation: one linear solve per line assembles the candidate, and a determinacy check on the assembled model settles the question; by the proposition, the check passes exactly when some structure generating $\Gamma^0$ under the observed policy satisfies the restrictions. The counterfactual half of the theorem is untouched by the failure documented here: given a structure and a counterfactual policy, the counterfactual reduced form comes from the same nonlinear step that determined $\{P^0,Q^0\}$, and the failure belongs to the linear step before it.

\paragraph{A canonical knife edge.}
Section~\ref{sec:exhibit}'s derivation used three features of the example and nothing else: the lagged rate and the policy shock enter through the rule alone; the shock is i.i.d.; and no line carries lagged output or inflation. The knife edge is the shock's lack of persistence: give the shock persistence and the structure would generically be identified.\footnote{Generically, because a single exception exists: the persistence the demand and cost-push shocks already share.} But an i.i.d.\ policy shock alongside interest smoothing is standard New Keynesian practice, in which persistence of the instrument is placed deliberately in the rule, as the term $\theta_i\,i_{t-1}$, rather than in the shock. A researcher who adopts the example as a template would inherit its configuration, and with it the collapse.

\paragraph{The observational-equivalence exhibit.}
While persistence in the policy shock would generically rescue identification in the paper's example, it would come at the price of the paper's Section~II.A demonstration of observational equivalence. The exhibit is a pair of models with distinct microfoundations and one reduced form. The behavioral model after \citet{Gabaix2020} has the structure $\xi^0$ in \eqref{eq:xi0}: at the paper's parameter choices its Phillips curve (NKPC1) is the NKPC line of that structure. The working-capital structure differs in its Phillips curve (NKPC2) alone, so the demonstration hinges, by the paper's Lemma~A.1, on that one line being consistent with the reduced form the two models share.\footnote{The paper's Section~II.A accounts for the equivalence differently: ``The common feature of these models is that their structures satisfy exactly six restrictions per line; see the structure in (Restrictions) below. As lemma~A.1 in the appendix shows in the general case, this feature makes them observationally equivalent.'' However, Lemma~A.1 mentions no restrictions; and the working-capital line carries the interest rate, which the structure in \eqref{eq:restr} excludes from its NKPC row.} Consistency is the six conditions of \eqref{eq:brackets}. In the coordinates of \eqref{eq:xi0}, a line with (NKPC2)'s shape is
\begin{equation*}
\bigl(\,0,\;\beta,\;0\;\big|\;\kappa,\;-1,\;\chi\;\big|\; 0,\;0,\;0\;\big|\;0,\;\gamma,\;0\,\bigr):
\end{equation*}
four free coefficients, on expected inflation, output, the rate, and the cost-push shock; the rate's, $\chi$, is the one that makes the curve working-capital. Six conditions on four coefficients in general leave no line at all. However, three of the six come free at the printed calibration. The line's lagged block is zero and the first two columns of $P^0$ in \eqref{eq:printed} are zero, so the first bracket of \eqref{eq:brackets} demands only at the lagged rate, $(fP^0+g)\,p_i=0$. And the policy shock's condition, the shock i.i.d.\ and the line's $m_t$-coefficient zero, is $(fP^0+g)\,q_m=0$: Section~\ref{sec:exhibit}'s identity $p_i=\theta_i\,q_m$ makes it the lagged rate's over again. The three conditions that remain on the four coefficients leave a one-parameter family of consistent lines with (NKPC2)'s shape, and Section~\ref{sec:misprint}'s working-capital line is one of them, shown consistent there. However, persistence would generically leave four of the six in force: the transition matrix stays the printed $P^0$ at every persistence, its first two columns zero, while the shock's condition gains a forecast term the lagged rate's lacks, and the identity's pair comes apart. Four conditions on four coefficients pin a single line.\footnote{The exception is a single persistence, distinct from the other shocks' common value: there the family returns whole, and Section~\ref{sec:misprint}'s working-capital line is consistent once more.} The line is the printed NKPC one, a line of the structure and so still consistent, with $\chi=0$ in \eqref{eq:xi0}: no (NKPC2) line with a nonzero $\chi$ remains, and no working-capital structure shares the reduced form.

\paragraph{Detection and remedies.}
The failure of the identification step can be caught before any counterfactual is attempted. Remark~\ref{rem:conseq} noted that \eqref{eq:brackets} and \eqref{eq:rank} contain only the reduced form and the menu of restrictions; the observed policy enters neither. The test is one $2k\times2k$ determinant, computable before any use is made of the structure: identification of a line needs $\det(R_{fg}-R_{hm}\Psi')\neq0$. At the printed calibration the determinant is zero for both patterns. What restores identification is a set of $2k$ restrictions independent on the consistent lines, and the determinant's two arguments, the menu and the equilibrium, give the two routes to it. A seventh restriction in each line takes the menu route.\footnote{For example, add to the Euler pattern the real-rate link, the line's $\E_t\pi_{t+1}$- and $i_t$-coefficients summing to zero, and to the Phillips-curve pattern the exclusion of $\E_ty_{t+1}$. The matrix of \eqref{eq:rank} is then taller than square, and the criterion is its rank: at the printed calibration each augmented menu reaches rank $2k$, so each restores identification of its line, and the printed structure and the working-capital alternative satisfy the two added restrictions alike.} Persistence in the policy shock takes the equilibrium route: the example's structural lines carry nothing of the shock's process, so the change leaves them untouched and alters only the identifying variation the equilibrium supplies. An i.i.d.\ shock reaches the equilibrium only through the rule's intercept, as the lagged rate does; a persistent one carries news about future policy, a direction of its own, and the pair comes apart, at the price the previous paragraph records for the observational-equivalence exhibit. The two routes differ in who can take them: the menu is the researcher's to strengthen, the shock's process belongs to the economy, and the determinant tells whether the menu route is needed.

\paragraph{Scope.}
One step of the paper's architecture is at issue: the selection of a structure from within the observationally equivalent class. The class stands, and so does the architecture around the step; this note has relied on both throughout, on Lemma~A.1 above all. With the rank condition in place of ``independent'', Theorem~1's hypothesis secures its conclusion.

\end{document}